\newtheorem{theorem}{Theorem}[section]
\newtheorem{definition}[theorem]{Definition}
\newtheorem{example}[theorem]{Example}
\newtheorem{lemma}[theorem]{Lemma}
\newtheorem{proposition}[theorem]{Proposition}
\newtheorem{remark}[theorem]{Remark}
\numberwithin{equation}{section}
\DeclareMathOperator*{\esssup}{ess\,sup\,}
\newcommand{\me}{\mathrm{e}}
\newcommand{\logn}{\mathrm{LogNormal}}
\newcommand{\md}{\mathrm{d}}
\newcommand{\mR}{\mathbb{R}}
\newcommand{\mU}{\mathbb{U}}
\newcommand{\mS}{\mathbb{S}}
\newcommand{\mE}{\mathbb{E}}
\newcommand{\mF}{\mathbb{F}}
\renewcommand{\epsilon}{\varepsilon}
\newcommand{\F}{\mathcal{F}}
\newcommand{\Ti}{\mathcal{T}}
\newcommand{\barpi}{\bar{\pi}}
\newcommand{\cred}{\color{red}}
\title{An Integral Equation in Portfolio Selection with Time-Inconsistent Preferences
}
\author{Zongxia Liang$^a$\thanks{Email: \texttt{liangzongxia@mail.tsinghua.edu.cn}}\ \ \ \ \ Sheng Wang$^a$\thanks{Email: \texttt{wangs20@mails.tsinghua.edu.cn}} \ \ \ \ \ Jianming Xia$^b$\thanks{Email: \texttt{xia@amss.ac.cn}} \ \ \ \ \ 
}	
\affil{$^a$Department of Mathematical Sciences, Tsinghua University, China \\
$^b$RCSDS, NCMIS, Academy of Mathematics and Systems Science, Chinese Academy of Sciences, Beijing 100190, China}
\begin{document}
	
\maketitle
	
\begin{abstract}
     This paper discusses a nonlinear integral equation arising from    portfolio selection with a class of time-inconsistent preferences.  We propose a unified framework requiring minimal assumptions, such as right-continuity of market coefficients and square-integrability of the market price of risk. Our main contribution is proving the existence and uniqueness of the square-integrable solution for the integral equation under mild conditions.  Illustrative applications include the mean-variance portfolio selection and the utility maximization with random risk aversion.
\end{abstract}
	
\textbf{Keywords:}{ Integral Equation; Time-Inconsistency;  Portfolio Selection;  Equilibrium Strategy;  Mean-Variance; Random Risk Aversion}

 \section{Introduction}{\label{intro}}
 A time-inconsistent preference usually leads to the time-inconsistency of the optimal strategy:  the strategy that is optimal today may not remain optimal tomorrow.
 To address the issue of time inconsistency, the intra-personal equilibrium
 is originally introduced by
 \cite{Strotz1955}.
 Since a precise definition of continuous-time intra-personal equilibrium was first provided by \cite{ekeland2006being}, the continuous-time intra-personal equilibrium has been studied for various time-inconsistent preferences. For instance, the mean-variance (MV) portfolio selection problem is investigated by \cite{basak2010dynamic}, \cite*{hu2012time, Hu2017}, \cite*{bjork2014mean}, and \cite*{Dai2021}; the non-exponential discounting problem is examined by\\ \cite{ekeland2006being}, \cite{ekeland2008investment}, \cite{hamaguchi2021time}, and \cite{mbodji2023portfolio}. For a broader discussion on  continuous-time intra-personal equilibrium, see \cite{yan2019time}, \cite{he2021equilibrium}, \cite{Hernandez2023}, and the references therein. 
 
 The continuous-time intra-personal equilibria for a class of  time-inconsistent  preferences can be characterized by the solutions to a class of nonlinear integral (or differential) equations.
Below, we provide  four specific preferences to illustrate this. 
The first one is the MV preference with the risk aversion parameter being inversely proportional to the wealth level,
which is proposed by \cite*{bjork2014mean}. In a market with constant coefficients, the authors apply the Arzelà-Ascoli theorem to establish the existence and uniqueness of the solution to their integral equation (4.11), thereby confirm the existence and uniqueness of the equilibrium strategy among state-independent strategies. The second one is the preference with random risk aversion of \cite{desmettre2023equilibrium}, which addresses the problem of maximizing utility with random risk aversion. This preference is time-inconsistent due to the inclusion of nonlinear functions of expectation. The authors characterize the equilibrium strategies by an infinite-dimensional system of ordinary differential equations (ODEs). Specifically, when the risk aversion has two possible values, this system reduces to a three-dimensional ODE system. However, the existence and uniqueness of the solution to this system is not provided there. The third one
is the generalized disappointment aversion (GDA) preference, which is discussed in \cite*{liang2024dynamic}. Under GDA, the implicit definition of certainty equivalence introduces time inconsistency. The equilibrium strategies are characterized by a nonlinear integral equation.   The fourth one is the (endogenous habit, or mean) scaled MV preferences and (endogenous habit formation, or generalized) mean–standard-deviation preferences of \cite*{kryger2020optimal}. The equilibrium strategy is characterized by their ODE system $(21)-(25)$ (in the case of endogenous habit, scaled MV preference, the existence and uniqueness of the solution to  the ODE system remains unproved), which implies that the equilibrium strategy is absolutely continuous.

This paper presents a unified approach to discuss the integral equation arising from  portfolio selection  with a large class of time-inconsistent preferences and portfolio bounds. We establish the existence and the uniqueness of solution to this equation.

Our main contribution is a unified framework that minimizes assumptions on market parameters. Specifically, we require only that the market coefficients are right-continuous and the market price of risk is square-integrable. In contrast,  \cite*{bjork2014mean},\\ \cite*{kryger2020optimal}, and \cite{desmettre2023equilibrium} assume constant market coefficients: the methods in \cite*{bjork2014mean} and \citet*{kryger2020optimal} cannot handle right-continuous coefficients, while \cite{desmettre2023equilibrium} and a part of \citet*{kryger2020optimal}) do not establish the existence and uniqueness of the solution to their ODE system;  \cite*{liang2024dynamic} assume that the market coefficients are right-continuous and bounded.  

 The remainder of the paper is organized as follows. Section \ref{sec:integral:equation} investigates the existence and the uniqueness of the solution of a specific type of integral equation. Section \ref{sec:time-inconsistent} derives this integral equation from    portfolio selection  with a class of time-inconsistent preferences. Finally, Section \ref{sec:applications} provides two illustrative  applications.

\section{An Integral equation}\label{sec:integral:equation}

Let $T>0$ be a constant and $d\ge1$ be an integer. Given a fixed function $h: [0,T)\times[0,\infty)\times\mR\to\mR$ and a fixed $\mR^d$-valued function $\lambda\in L^2((0,T),\mR^d)$, we consider the following fully non-linear integral equation:\footnote{For a matrix $A$, $A^\top$ denotes the transpose of $A$. For a vector $z\in\mR^d$,  $|z|\triangleq\sqrt{z^\top z}$.
For any $0\leq t_1< t_2\leq T$, let $L^2((t_1,t_2),\mR^d)$ denote the set of $\mR^d$-valued Lebesgue measurable functions $\phi$ such that
$\Vert \phi\Vert_{L^2((t_1,t_2),\mR^d)}\triangleq \left(\int_{t_1}^{t_2}|\phi(s)|^2\md s\right)^{\frac12}<\infty$.}
\begin{align}\label{eq:integral:h}
a(t)=\mathcal{P}_t\left(h\left(t,\sqrt{\int_t^T|a(s)|^2\md s},\int_t^Ta^\top(s)\lambda(s)\md s\right)\lambda(t)\right), \quad t\in(0,T),
\end{align}
where $a:(0,T)\to\mR^d$ is a function to be determined by the equation and, for every $t\in[0,T)$, $\mathcal{P}_t$ is the projection from $\mR^d$ to a convex and closed set $\mathbb{U}_t\subseteq \mR^d$ with $\mathbf{0}\in\mathbb{U}_t$,  i.e.,
\begin{align}\label{eq:projection}
			\mathcal{P}_t(k)\triangleq \arg \min_{\tilde{k}\in\mU_t}|\tilde{k}-k|^2,\quad k\in\mR^d.
\end{align}

\begin{definition}
The function $h$ is called \emph{locally Lipchitz} if, for any $M,N\in(0,\infty)$, there is a constant $\ell>0$ such that
$|h(t,x_1,y_1)-h(t,x_2,y_2)|\leq \ell (|x_1-x_2|+|y_1-y_2|)$ for all $t\in[0,T)$ and $(x_1,y_1)$, $(x_2,y_2)\in[0,M]\times [-N,N]$. In this case, $\ell$
is called a \emph{Lipchitz constant} of $h$ on $[0,T)\times[0,M]\times [-N,N]$.
\end{definition}

\begin{definition}
The function $ h$ is called \emph{locally bounded} if $ h$ is bounded on $[0,T)\times[0,M]\times [-N,N]$ for any $M,N\in(0,\infty)$. 
\end{definition}

\begin{definition}
The function $ h$ is called \emph{locally I-bounded}  if 
\begin{align*}
    \int_0^T \sup_{(x,y)\in[0,M]\times [-N,N]}|\mathcal{P}_t(h(t,x,y)\lambda(t))|^2|\md t<\infty
\end{align*}
 for any $M,N\in(0,\infty)$. The function $h$ is called \emph{I-bounded} if the above inequality holds when $[0, M] \times [-N, N]$ is replaced by $[0, \infty) \times \mathbb{R}$.
\end{definition}

\begin{remark}
  It is evident that if $h$ is (locally) bounded, then $h$ is (locally) I-bounded as well. Furthermore, if $\mathbb{U}_t \subset \tilde{\mathbb{U}}$ for some bounded set $\tilde{\mathbb{U}} \subset \mathbb{R}^d$ and for all $t\in[0,T)$, then $h$ must be I-bounded.
\end{remark}

For any $\phi\in L^2((t_1,t_2),\mR^d)$, we will frequently use the following notations:    \begin{align*}
        v_\phi(t_1,t_2)\triangleq \Vert \phi\Vert^2_{L^2((t_1,t_2),\mR^d)}=\int_{t_1}^{t_2}|\phi(s)|^2\md s, \quad y_{\phi}(t_1,t_2)\triangleq \int_{t_1}^{t_2}\phi^{\top}(s)\lambda(s)\md s.
    \end{align*}
    When $t_2=T$, we  simply write $v_\phi(t_1)$ and $y_\phi(t_1)$ for $v_\phi(t_1,T)$ and $y_\phi(t_1,T)$, respectively.
 
We begin with the following theorem on the uniqueness of the solution.

\begin{theorem}\label{thm:integraeq:uniqueness}
If $h$ is  locally Lipschitz, then,  for every $\tau\in[0,T)$, the equation 
\begin{align}\label{eq:integral:tau}
a(t)=\mathcal{P}_t\left(h\left(t,\sqrt{v_{a}(t)},y_a(t)\right)\lambda(t)\right), \quad t\in(\tau,T)
	\end{align}
has at most one solution $a\in L^2((\tau,T),\mR^d)$.
\end{theorem}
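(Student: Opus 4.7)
The plan is a standard backward-Gronwall uniqueness argument that exploits the $L^2$-integrability of both candidate solutions in order to localise the Lipschitz constant of $h$. Suppose $a_1,a_2\in L^2((\tau,T),\mR^d)$ are two solutions of \eqref{eq:integral:tau}. Since $v_{a_i}(t)\le \Vert a_i\Vert^2_{L^2((\tau,T),\mR^d)}$ and, by Cauchy--Schwarz, $|y_{a_i}(t)|\le \Vert a_i\Vert_{L^2((\tau,T),\mR^d)}\Vert \lambda\Vert_{L^2((0,T),\mR^d)}$ for every $t\in[\tau,T]$, I can choose finite $M,N>0$ so that the pairs $(\sqrt{v_{a_i}(t)},y_{a_i}(t))$ all lie in $[0,M]\times[-N,N]$ uniformly in $t\in[\tau,T]$ and $i=1,2$. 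The local Lipschitz hypothesis then produces a single constant $\ell>0$ controlling $h$ on this box uniformly in $t$.

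Next I would invoke non-expansiveness of the metric projection onto the closed convex set $\mU_t$, $|\mathcal{P}_t(k_1)-\mathcal{P}_t(k_2)|\le|k_1-k_2|$. Subtracting the two instances of \eqref{eq:integral:tau} and applying this inequality gives
$$|a_1(t)-a_2(t)|\le \ell\,|\lambda(t)|\Bigl(\bigl|\sqrt{v_{a_1}(t)}-\sqrt{v_{a_2}(t)}\bigr|+|y_{a_1}(t)-y_{a_2}(t)|\Bigr).$$
The reverse triangle inequality for the $L^2((t,T),\mR^d)$-norm bounds $|\sqrt{v_{a_1}(t)}-\sqrt{v_{a_2}(t)}|$ by $\Vert a_1-a_2\Vert_{L^2((t,T),\mR^d)}$, and Cauchy--Schwarz bounds $|y_{a_1}(t)-y_{a_2}(t)|$ by $\Vert a_1-a_2\Vert_{L^2((t,T),\mR^d)}\Vert \lambda\Vert_{L^2((0,T),\mR^d)}$. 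Setting $f(t)\triangleq\int_t^T|a_1(s)-a_2(s)|^2\,\md s$ and squaring, this yields
$$|a_1(t)-a_2(t)|^2\le C\,|\lambda(t)|^2\,f(t)$$
for a constant $C$ depending only on $\ell$ and $\Vert\lambda\Vert_{L^2((0,T),\mR^d)}$.

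Integrating from $t$ to $T$ gives the backward integral inequality $f(t)\le C\int_t^T|\lambda(s)|^2 f(s)\,\md s$. Because $|\lambda|^2\in L^1(0,T)$, $f$ is bounded and $f(T)=0$, a standard Gronwall argument (e.g.\ observing that $\bigl(\int_t^T|\lambda|^2 f\bigr)\exp\bigl(C\int_0^t|\lambda(s)|^2\,\md s\bigr)$ is non-decreasing in $t$) forces $f\equiv 0$ on $[\tau,T]$, whence $a_1=a_2$ almost everywhere.

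The only subtlety I anticipate is bookkeeping: one must confirm the Lipschitz constant is uniform in $t$ (which is built into the stated definition of \emph{locally Lipschitz}) and one must run Gronwall \emph{backwards} from $T$, since the nonlinearities $v_a$ and $y_a$ integrate from the current time up to $T$, so the natural ``initial'' condition $f(T)=0$ sits at the right endpoint rather than at $\tau$. No compactness or fixed-point machinery is required; the uniqueness follows entirely from contraction-type estimates.
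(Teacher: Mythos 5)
Your proposal is correct, and the heart of it coincides with the paper's argument: both localise the Lipschitz constant via the a priori bounds $v_{a_i}(t)\le\Vert a_i\Vert^2_{L^2}$ and $|y_{a_i}(t)|\le\Vert a_i\Vert_{L^2}\sqrt{v_\lambda(0)}$, both use the non-expansiveness $|\mathcal{P}_t(k_1)-\mathcal{P}_t(k_2)|\le|k_1-k_2|$, and both arrive at the same pointwise estimate $|a_1(t)-a_2(t)|\le \ell\,|\lambda(t)|\bigl(1+\sqrt{v_\lambda(0)}\bigr)\sqrt{v_{a_1-a_2}(t)}$. The two proofs part ways only in the endgame. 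The paper bounds $v_{a_1-a_2}(t)$ by $v_{a_1-a_2}(t_1)$ for $t\ge t_1$, obtains the self-referential inequality $\sqrt{v_{a_1-a_2}(t_1)}\le \ell\sqrt{v_\lambda(t_1)}\bigl(1+\sqrt{v_\lambda(0)}\bigr)\sqrt{v_{a_1-a_2}(t_1)}$, and then covers $[\tau,T)$ by finitely many subintervals on which the prefactor is at most $\frac12$ (finitely many because $\int_0^T|\lambda|^2\,\md s<\infty$). You instead keep $f(s)=v_{a_1-a_2}(s)$ inside the integral, getting $f(t)\le C\int_t^T|\lambda(s)|^2f(s)\,\md s$, and close with a backward Gronwall lemma using $f$ bounded, $f(T)=0$, and $|\lambda|^2\in L^1$; your monotone auxiliary function argument is a valid proof of that lemma (note $F(t)=\int_t^T|\lambda|^2f$ is absolutely continuous since $|\lambda|^2 f\in L^1$, so the differentiation is legitimate a.e.\ and the fundamental theorem of calculus applies). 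Both conclusions are equally standard; Gronwall is arguably cleaner and avoids the explicit partition bookkeeping, while the paper's iteration is more self-contained. The only cosmetic difference is that the paper additionally upgrades ``$a_1=a_2$ a.e.'' to everywhere equality by feeding the a.e.\ identity back into the equation, but for uniqueness in $L^2((\tau,T),\mR^d)$ your a.e.\ conclusion already suffices.
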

\begin{proof}
    Suppose that $a^{(i)}\in L^{2}((\tau,T),\mR^d)$, $i=1,2$,  are two solutions of \eqref{eq:integral:tau}.  
Let $$M_1\triangleq \max\left\{\sqrt{v_{a^{(1)}}(\tau)},\sqrt{v_{a^{(2)}}(\tau)}\right\}$$
and let $\ell_1$ be a Lipschitz constant of $h$  on
$$[0,T)\times[0,M_1]\times\left[-M_1\sqrt{v_{\lambda}(0)},M_1\sqrt{v_{\lambda}(0)}\right].$$  
Using the fact: $|\mathcal{P}_t(h_1)-\mathcal{P}_t(h_2)|\leq|h_1-h_2|$ for any $t\in(0,T)$ and $h_1, h_2\in \mR^d$, we have, for every $t_1\in[\tau,T)$ and every $t\in(t_1,T)$,
 we have 
	\begin{align*}
		&\left|a^{(1)}(t)-a^{(2)}(t)\right|
		\leq \ell_1 |\lambda(t)|\left[\left|\sqrt{v_{a^{(1)}}(t)}\!-\!\sqrt{v_{a^{(2)}}(t)}\right|\!+\!\left|y_{a^{(1)}}(t)\!-\!y_{a^{(2)}}(t)\right|\right]\\
		\leq& \ell_1 |\lambda(t)| \left[\sqrt{v_{a^{(1)}-a^{(2)}}(t)}+\sqrt{v_{\lambda}(t)}\cdot \sqrt{v_{a^{(1)}-a^{(2)}}(t)}\right] \leq \ell_1 |\lambda(t)|\left(1+\sqrt{v_{\lambda}(0)}\right) \sqrt{v_{a^{(1)}-a^{(2)}}(t_1)}.
	\end{align*}
    Therefore, we get
\begin{equation}\label{ineq:a12}
\begin{split}
		 \sqrt{v_{a^{(1)}-a^{(2)}}(t_1)}
		\leq \ell_1\sqrt{v_{\lambda}(t_1)}\left(1+\sqrt{v_{\lambda}(0)}\right)
		 \sqrt{v_{a^{(1)}-a^{(2)}}(t_1)},\quad\forall\, t_1\in[\tau,T).
\end{split}\end{equation}
Let $s_1=\inf\left\{t_1\in[\tau,T): \ell_1\sqrt{v_{\lambda}(t_1)}\left(1+\sqrt{v_{\lambda}(0)}\right)\leq \frac{1}{2}\right\}$. 
Then $a^{(1)}= a^{(2)}$ a.e. on $(s_1,T)$. If $s_1=\tau$, then $a^{(1)}= a^{(2)}$ a.e. on $(\tau,T)$. Otherwise, similar to \eqref{ineq:a12},  we have
\begin{align*}
		\sqrt{v_{a^{(1)}-a^{(2)}}(t_2,s_1)} \leq\ell_1\sqrt{v_{\lambda}(t_2,s_1)}\left(1+\sqrt{v_{\lambda}(0)}\right)\sqrt{v_{a^{(1)}-a^{(2)}}(t_2,s_1)},\quad \forall\,t_2\in[\tau,s_1). 
	\end{align*}
Let $s_2=\inf\left\{t_2\in[\tau,s_1): \ell_1\sqrt{v_{\lambda}(t_2,s_1)}\left(1+\sqrt{v_{\lambda}(0)}\right)\leq \frac{1}{2}\right\}$. Then $a^{(1)}= a^{(2)}$ a.e. on $(s_2,s_1)$. If $s_2=\tau$, then $a^{(1)}= a^{(2)}$ a.e. on $(\tau,T)$. Otherwise, we 
repeat such a procedure. Up to finitely many times, we can obtain that $a^{(1)}= a^{(2)}$ a.e. on $(\tau,T)$. In this case, 
$v_{a^{(1)}}= v_{a^{(2)}}$ and $y_{a^{(1)}}= y_{a^{(2)}}$ everywhere on $(\tau,T)$, which implies that $a^{(1)}= a^{(2)}$ everywhere on $(\tau,T)$.
\end{proof}

As regards the local existence of the solution, we have the following lemma.

\begin{lemma}\label{lma:integraeq}
If $h$ is locally I-bounded and locally Lipschitz, then there exists some $\tau^*\in[0,T)$ such that Equation \eqref{eq:integral:tau} has a solution $a\in L^{2}((\tau,T),\mR^d)$ for every $\tau\in[\tau^*,T)$.
\end{lemma}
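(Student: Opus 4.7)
The plan is to apply the Banach fixed-point theorem to the operator
\[
\Phi(a)(t) \defeq \mathcal{P}_t\!\left(h\!\left(t, \sqrt{v_a(t)}, y_a(t)\right)\lambda(t)\right), \quad t\in(\tau,T),
\]
on a suitable closed ball of $L^2((\tau,T),\mR^d)$, shrinking $\tau$ toward $T$ to gain the smallness needed for both self-mapping and contractivity. Concretely, I would fix an arbitrary radius $M>0$, set $N\defeq M\sqrt{v_\lambda(0)}$, and let $\ell$ denote a Lipschitz constant of $h$ on $[0,T)\times[0,M]\times[-N,N]$. For any $a$ in the closed ball
\[
B_\tau \defeq \left\{a \in L^2((\tau,T),\mR^d): v_a(\tau)\leq M^2\right\},
\]
one has $\sqrt{v_a(t)}\in[0,M]$ and, by Cauchy--Schwarz, $|y_a(t)|\leq \sqrt{v_a(t)}\sqrt{v_\lambda(t)}\leq N$ for every $t\in[\tau,T)$, so the arguments of $h$ always lie in the region where the locally I-bounded and locally Lipschitz hypotheses on $h$ apply.

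Next, I would verify that $\Phi(B_\tau)\subseteq B_\tau$. The pointwise bound
\[
|\Phi(a)(t)| \leq \sup_{(x,y)\in[0,M]\times[-N,N]} |\mathcal{P}_t(h(t,x,y)\lambda(t))|
\]
combined with local I-boundedness shows that $\int_\tau^T|\Phi(a)(t)|^2\md t\to 0$ as $\tau\to T$, uniformly in $a\in B_\tau$. Hence, for $\tau$ sufficiently close to $T$, $v_{\Phi(a)}(\tau)\leq M^2$. For contractivity, I would repeat the chain of estimates from the proof of Theorem \ref{thm:integraeq:uniqueness} using the Lipschitz constant $\ell$, the reverse triangle inequality $|\sqrt{v_{a^{(1)}}(t)}-\sqrt{v_{a^{(2)}}(t)}|\leq\sqrt{v_{a^{(1)}-a^{(2)}}(t)}$, and Cauchy--Schwarz on $y_{a^{(1)}-a^{(2)}}(t)$, to get
\[
v_{\Phi(a^{(1)})-\Phi(a^{(2)})}(\tau) \leq \ell^2\, v_\lambda(\tau)\bigl(1+\sqrt{v_\lambda(0)}\bigr)^2 v_{a^{(1)}-a^{(2)}}(\tau).
\]
Since $v_\lambda(\tau)\to 0$ as $\tau\to T$, this coefficient can be made $\leq 1/4$, turning $\Phi$ into a $\tfrac{1}{2}$-contraction on $B_\tau$.

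I would then take $\tau^*\in[0,T)$ so that for all $\tau\in[\tau^*,T)$ both the self-mapping threshold and $\ell\sqrt{v_\lambda(\tau)}(1+\sqrt{v_\lambda(0)})\leq 1/2$ hold. Because both tail integrals $\int_\tau^T|\lambda|^2\md t$ and $\int_\tau^T\sup_{(x,y)\in[0,M]\times[-N,N]}|\mathcal{P}_t(h(t,x,y)\lambda(t))|^2\md t$ are non-increasing in $\tau$, a single $\tau^*$ works simultaneously for every $\tau\in[\tau^*,T)$. Banach's fixed-point theorem on the closed subset $B_\tau$ of the Banach space $L^2((\tau,T),\mR^d)$ then delivers the desired solution.

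There is no deep obstacle; the only care required is to fix the parameters in the correct order ($M$ first, which determines $N$ and hence $\ell$, and only afterwards select $\tau^*$) and to exploit the monotonicity of the two tail integrals to make the choice of $\tau^*$ uniform over the whole interval $[\tau^*,T)$ rather than depending on a particular $\tau$.
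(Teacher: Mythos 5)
Your proposal is correct and follows essentially the same route as the paper: a Banach fixed-point argument on the closed ball $\{a: v_a(\tau)\leq M^2\}$ of $L^2((\tau,T),\mR^d)$, with local I-boundedness giving the self-mapping property and the local Lipschitz estimate combined with Cauchy--Schwarz giving the contraction, both activated by taking $\tau$ close enough to $T$. The only cosmetic difference is that the paper fixes the radius $M=1$ rather than leaving it arbitrary.
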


\begin{proof}
For every $\tau\in[0,T)$, let 
$\mathcal{S}_\tau=\left\{a\in L^{2}((\tau,T),\mR^d): v_{a}(\tau) \leq 1\right\}$.
Obviously, $\mathcal{S}_\tau$ is a closed subset of $L^{2}((\tau,T),\mR^d)$. 
For every $a\in\mathcal{S}_\tau$, let
$$(\Ti a)_t\triangleq  \mathcal{P}_t\left(h\left(t,\sqrt{v_{a}(t)},y_{a}(t)\right)\lambda(t)\right) ,\quad t\in(\tau,T).$$ 

First, we show that $\Ti$ is a map from $\mathcal{S}_\tau$ to $\mathcal{S}_\tau$  for every $\tau\in[0,T)$ that is sufficiently close to $T$. Indeed, for every $a\in\mathcal{S}_\tau$, we have
    $\sqrt{v_{a}(t)}\leq 1\text{ and } \left|y_{a}(t)\right|\leq \sqrt{v_{\lambda}(0)}$, $\forall t\in[\tau,T)$.
Then
\begin{align*}
    v_{\Ti a}(\tau)\leq  \int_{\tau}^T \sup_{(x,y)\in[0,1]\times [-\sqrt{v_{\lambda}(0)},\sqrt{v_{\lambda}(0)}]}|\mathcal{P}_t(h(t,x,y)\lambda(t))|^2\md t.
\end{align*}
As $h$ is locally I-bounded, we can choose $\hat{\tau}\in[0,T)$ such that $ v_{\Ti a}(\hat{\tau})\leq 1$,  then $\Ti$ is a map from $\mathcal{S}_\tau$ to $\mathcal{S}_\tau$ for every $\tau\in[\hat\tau,T)$. 

Next, we show that there exists some $\tau^*\in[\hat\tau,T)$ such that $\Ti$ is a contraction on $\mathcal{S}_\tau$ for every $\tau\in[\tau^*,T)$. Indeed, let $\ell$ be a Lipschitz constant of $h$ on $[0,T)\times[0,1]\times [-\sqrt{v_{\lambda}(0)},\sqrt{v_{\lambda}(0)}]$. Using a similar way to \eqref{ineq:a12}, for any $\tau\in[\hat\tau,T)$ and $a^{(i)}\in \mathcal{S}_\tau$, $i=1,2$, 
we have
\begin{align*}
   \sqrt{v_{\Ti a^{(1)}-\Ti a^{(2)}}(\tau)}
   \leq
    \ell \sqrt{v_{\lambda}(\tau)}\left(1+\sqrt{v_{\lambda}(0)}\right) \sqrt{v_{a^{(1)}- a^{(2)}}(\tau)}.
\end{align*}
Choose a $\tau^*\in[\hat\tau,T)$ such that $\ell \sqrt{v_{\lambda}(\tau^*)}\left(1+\sqrt{v_{\lambda}(0)}\right)\leq \frac 12$, then
  $\Ti$ is a contraction on $\mathcal{S}_\tau$ for every $\tau\in[\tau^*,T)$. 
  
 Therefore, for every $\tau\in[\tau^*,T)$, $\Ti$ has a unique fixed point in $\mathcal{S}_\tau$, as such  \eqref{eq:integral:tau} has a solution in $L^{2}((\tau,T),\mR^d)$. 
\end{proof}
Assuming further that $h$ is I-bounded, we have the following global existence of the solution:
\begin{theorem}\label{thm:extend}
If $h$ is I-bounded and locally Lipschitz, then Equation \eqref{eq:integral:h} 
has a unique solution $a\in L^{2}((0,T),\mR^d)$.
\end{theorem}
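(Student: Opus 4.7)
The uniqueness half follows immediately from Theorem~\ref{thm:integraeq:uniqueness}, since any $L^2((0,T),\mR^d)$ solution of~\eqref{eq:integral:h} is in particular a solution of~\eqref{eq:integral:tau} for $\tau=0$. For existence, my plan is a maximal-extension argument: Lemma~\ref{lma:integraeq} supplies a local solution near $T$, and the I-boundedness of $h$ yields an a priori $L^2$ bound strong enough to continue the solution all the way back to $0$.

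Concretely, I would set
\[
\tau^{**}\triangleq\inf\bigl\{\tau\in[0,T):\text{\eqref{eq:integral:tau} admits a solution in } L^2((\tau,T),\mR^d)\bigr\},
\]
which satisfies $\tau^{**}\le\tau^*<T$ by Lemma~\ref{lma:integraeq}. Theorem~\ref{thm:integraeq:uniqueness} forces the solutions for different $\tau>\tau^{**}$ to be consistent on overlaps, so along any sequence $\tau_n\downarrow\tau^{**}$ they glue into a single function $a^*$ on $(\tau^{**},T)$ satisfying~\eqref{eq:integral:tau}. The I-boundedness of $h$ gives the pointwise envelope $|a^*(t)|^2\le\sup_{(x,y)\in[0,\infty)\times\mR}|\mathcal{P}_t(h(t,x,y)\lambda(t))|^2$, hence $v_{a^*}(\tau^{**})\le C^2$, where
\[
C^2\triangleq\int_0^T\sup_{(x,y)\in[0,\infty)\times\mR}\bigl|\mathcal{P}_t(h(t,x,y)\lambda(t))\bigr|^2\md t<\infty,
\]
and $|y_{a^*}(\tau^{**})|\le C\sqrt{v_\lambda(0)}$ by Cauchy--Schwarz. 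In particular $a^*\in L^2((\tau^{**},T),\mR^d)$.

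To derive a contradiction from $\tau^{**}>0$, I would reduce the backward extension to a second application of Lemma~\ref{lma:integraeq}. Writing $V^*\triangleq v_{a^*}(\tau^{**})$ and $Y^*\triangleq y_{a^*}(\tau^{**})$, define $\tilde h(t,x,y)\triangleq h\bigl(t,\sqrt{x^2+V^*},\,y+Y^*\bigr)$ on $[0,\tau^{**})\times[0,\infty)\times\mR$. Splitting $v_a$ and $y_a$ at $\tau^{**}$, one checks that if $b\in L^2((\tau,\tau^{**}),\mR^d)$ solves
\[
b(t)=\mathcal{P}_t\!\left(\tilde h\!\left(t,\sqrt{v_b(t,\tau^{**})},\,y_b(t,\tau^{**})\right)\lambda(t)\right),\quad t\in(\tau,\tau^{**}),
\]
then the concatenation $a\triangleq b\,\ind_{(\tau,\tau^{**})}+a^*\,\ind_{(\tau^{**},T)}$ solves~\eqref{eq:integral:tau} on $(\tau,T)$. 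Since $x\mapsto\sqrt{x^2+V^*}$ is $1$-Lipschitz and $y\mapsto y+Y^*$ is an isometry, $\tilde h$ inherits local Lipschitz continuity from $h$; and since $\{(\sqrt{x^2+V^*},y+Y^*):x\ge0,y\in\mR\}\subseteq[0,\infty)\times\mR$, the I-boundedness of $h$ passes to $\tilde h$ on $[0,\tau^{**})$. Applying Lemma~\ref{lma:integraeq} to $\tilde h$ then produces some $\tau<\tau^{**}$ admitting an $L^2((\tau,\tau^{**}),\mR^d)$ solution $b$, which in turn yields a solution of~\eqref{eq:integral:tau} on $(\tau,T)$---contradicting the defining infimum. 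Hence $\tau^{**}=0$, which is the desired conclusion.

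The main point requiring care is the reduction step: one has to verify, via additivity of the defining integrals, that the $\tilde h$-equation on the prefix $(\tau,\tau^{**})$ is genuinely equivalent to the $h$-equation for the concatenated $a$, and that $\tilde h$ satisfies both hypotheses of Lemma~\ref{lma:integraeq} so that the lemma can be invoked verbatim on the shorter interval. Once this is in place, one extension step suffices for a contradiction, so no iterative or transfinite construction is needed---the a priori bound from I-boundedness is exactly what rules out a finite-time blow-up of $v_{a^*}(\tau^{**})$ as $\tau^{**}\downarrow0$ could otherwise produce.
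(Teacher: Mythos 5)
Your proposal is correct and follows essentially the same route as the paper: uniqueness via Theorem~\ref{thm:integraeq:uniqueness}, gluing local solutions into a maximal one, the a priori bound $v_{a}(\cdot)\le\int_0^T\sup_{(x,y)}|\mathcal{P}_t(h(t,x,y)\lambda(t))|^2\md t$ from I-boundedness, and a backward extension via the shifted equation (your $\tilde h$-equation is exactly the paper's Equation~\eqref{eq:extend:h}, solved by rerunning the contraction argument of Lemma~\ref{lma:integraeq}). The only cosmetic difference is that the paper extends the data to $t<0$ and shows the maximal interval's left endpoint is negative, whereas you stop at $0$ and contradict the infimum directly.
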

\begin{proof}
Theorem~\ref{thm:integraeq:uniqueness} implies the following fact: if, for each $i=1,2$,  $a^{(i)}\in L^{2}((\tau_i,T),\mR^d)$ is a solution of \eqref{eq:integral:tau} for  $\tau=\tau_i\in[0,T)$ with  $0\le\tau_2\leq\tau_1<T$, then $a^{(1)}= a^{(2)}$ on $(\tau_1,T)$. 
  Let $(\tau,T)\subseteq(-\infty,T)$ be the maximal interval for the existence of solution\footnote{We have shown that $a^{(1)}$ can be extended to $a^{(2)}$ uniquely. Thus we can define the maximal interval for the existence of solution. 
  In addition, we define $\lambda(t)=\lambda(0)$, $\mathcal{P}_t=\mathcal{P}_0$ and $h(t,\cdot,\cdot)=h(0,\cdot,\cdot)$ for $t<0$, thus $\tau$ may belong to $[-\infty,0)$.
  }
  and $a$ be the solution on $(\tau,T)$. 
  Then we must have $\lim\limits_{t\downarrow \tau}v_a(t)=\infty$ if $\tau>-\infty$.  Suppose, for the sake of contradiction, that $\tau\in(-\infty,T)$ and $v_a(\tau)<\infty$.
  Consider the following equation for $\tilde{a}\in L^2((\tau-\epsilon_0,\tau),\mR^d)$:
  \begin{align}\label{eq:extend:h}
\tilde{a}(t)=\mathcal{P}_t\left(h\left(t,\sqrt{v_{\tilde{a}}(t,\tau)+v_a(\tau)},y_{\tilde{a}}(t,\tau)+y_a(\tau)\right)\lambda(t)\right), \quad t\in(\tau-\epsilon_0,\tau),
	\end{align}
    where $\epsilon_0>0$ is a constant to be determined.
   Using the same approach as in the proof of Lemma  \ref{lma:integraeq}, there is a $\epsilon_0>0$ such that \eqref{eq:extend:h} admits a solution in $L^2((\tau-\epsilon_0,\tau),\mR^d)$. Thus $a$ can be extended as the solution of \eqref{eq:integral:h} with $\tau$ replaced by $\tau-\epsilon_0$. This contradicts the assumption that $(\tau,T)$ is the maximal interval for the existence of solution.
  
  Assume that $h$ is I-bounded and $(\tau,T)$ is the maximal interval for the existence of solution.   We only need to show that $\tau<0$. Indeed, suppose on the contrary $\tau\geq 0$. Then  
  \begin{align*}
    \infty=\lim\limits_{t\downarrow \tau}v_a(t)\le  v_a(\tau)\leq \int_\tau^T \sup_{(x,y)\in[0,\infty)\times \mR}|\mathcal{P}_t(h(t,x,y)\lambda(t))|^2|\md t<\infty,
  \end{align*}
  which is impossible. 
 \end{proof}

\begin{remark}\label{rmk:h:bounded}
   Let $h$ be locally Lipschitz.  In the case  where $\lambda\in L^{\infty}((0,T),\mR^d)$,  $h$ is bounded, and no portfolio bounds are imposed, we, with Yuan, have established the global existence and uniqueness of the solution in $L^\infty((0,T),\mR^d)$; see  \citet*[Theorem 3.5]{liang2024dynamic}. In contrast, Theorem \ref{thm:extend} only requires $\lambda\in L^{2}((0,T),\mR^d)$  and $h$ is I-bounded to guarantee the existence and uniqueness of the solution in $L^{2}((0,T),\mR^d)$.  The I-boundedness of $h$ can be guaranteed by the boundedness of $h$ or the boundedness of $\{\mathbb{U}_t,t\in[0,1)\}$, which are easy to verify.
\end{remark}

\section{Portfolio selection  with time-inconsistent preferences}\label{sec:time-inconsistent}
We derive Equation \eqref{eq:integral:h} from   portfolio selection  with a class of time-inconsistent preferences in this section. 
Let $T>0$ be a finite time horizon and  $(\Omega,\mathcal{F},\mathbb{F},\mathbb{P})$ be a filtered complete probability space, where $\mathbb{F}=\{\mathcal{F}_t,0\leq t\leq T\}$ is the augmented natural filtration generated by a standard $d$-dimensional Brownian motion $\{B(t), 0\leq t\leq T\}$.
In addition, suppose $\F=\F_T$. The market consists of one risk-free asset (bond) and $d$ risky assets (stocks). For simplicity, we assume that the interest rate of the bond is zero. The dynamics of the stock price processes $S_{i}$, $i=1,\cdots,d$ are given by
\begin{align*}
	dS_{i}(t)=S_{i}(t)[\mu_{i}(t)\md t+\sigma_{i}(t) \md B(t)],\quad t\in[0,T],
\end{align*}
where the market coefficients $\mu:[0,T]\rightarrow\mathbb{R}^{d}$ and $\sigma:[0,T]\rightarrow\mathbb{R}^{d\times d}$ are right-continuous and deterministic, and $\sigma_i$ denotes the $i$-th row of $\sigma$. We always assume that $\int_0^T|\mu(t)|\md t+\sum_{i=1}^d\int_t^T|\sigma_i(t)|^2\md t<\infty$ and 
$\sigma(t)$ is invertible for every $t\in[0,T]$. Let $\lambda(t) = (\sigma(t))^{-1}\mu(t)$ be the \emph{market price of risk}. We also always assume that $\lambda\in L^2((0,T),\mR^d)$. 

For any $m\geq 1$ and $\mS \subset \mR^m$, $L^0(\mF,\mS)$ is the space of $\mS$-valued, $\mF$-progressively measurable processes. For each $t\in[0,T]$, $p\in[1,\infty]$, $L^p(\F_t,\mS)$ is the set of all $L^p$-integrable, $\mS$-valued, and $\F_t$-measurable random variables. 
A trading strategy is a process $\pi=\{\pi_t, t\in[0,T)\}\in L^0(\mF,\mR^d)$ such that $\int_0^T|\pi^\top_t\mu(t)|\md t+\int_0^T|\sigma^\top(t)\pi_t|^2\md t<\infty$ a.s., where $\pi_t$ represents the vector of portfolio weights determining the investment of wealth into the stocks at time $t$. The self-financing wealth process $\{W^{\pi}_t,0\leq t\leq T\}$ associated with a trading strategy $\pi$ satisfying the following stochastic differential equation (SDE):
\begin{equation}\label{wealthdynamic}
		\md W^\pi_t=W^\pi_t \pi^{\top}_t\mu(t)\md t+W^\pi_t \pi^{\top}_t\sigma(t)\md B(t),\quad	
		W^\pi_0=w_0>0.
\end{equation}
The time-$t$ performance of a trading strategy $\pi$ is $J(t,\pi)$.  We assume that $J(t,\cdot)$ can be (equivalently) represented as a functional of the conditional distribution of ${W^\pi_T\over W^\pi_t}$ given $\F_t$.  Some examples will be presented after Definition \ref{def:equilibrium}.

Let $\mathbb{U}\subset\mR^d$ be a convex and closed set with $\mathbf{0}\in\mathbb{U}$. 
 A trading strategy $\pi$ is called \emph{admissible}  if $\pi_t\in\mU$ a.s. and the time-$t$ functional $J(t,\pi)$ is well defined for all $t\in[0,T)$.
Let $\Pi$ denote the set of all admissible strategies. 

In general, as the optimal strategy at time $t$, which maximizes $J(t,\pi)$ over $\pi\in\Pi$, is not necessarily optimal at time $s>t$,  we consider the so-called equilibrium strategies. 
Hereafter, we always consider a fixed $\barpi\in\Pi$, which is a candidate equilibrium strategy. 
For any $t\in[0,T)$, $\epsilon\in(0,T-t)$ and $k\in  L^\infty(\F_t,\mathbb{U})$,   let $\barpi^{t,\epsilon,k}$ be given by
\[\barpi^{t,\epsilon,k}_s\triangleq
\left\{
\begin{aligned}
	&k, &s\in [t,t+\epsilon),\\
	&\barpi_s, &s\notin [t,t+\epsilon).
\end{aligned}
\right.
\]
The strategy $\barpi^{t,\epsilon,k}$ serves as a perturbation of $\barpi$. 
Following \cite*{hu2012time,Hu2017} and
\cite*{yan2019time}, we introduce the definition of equilibrium strategies as follows. 
\begin{definition}\label{def:equilibrium}
	$\barpi$ is called an \emph{equilibrium strategy} if, for any $t\in[0,T)$, we have
	\begin{equation}\label{limsup:J}
		\lim_{\epsilon\downarrow 0 }\esssup_{\epsilon_0\in(0,\epsilon)}\frac{J(t,\bar{\pi}^{t,\epsilon_0,k})-J(t,\bar{\pi})}{\epsilon_0}\leq 0\quad \text{a.s. for all }k\in  L^\infty(\F_t,\mU).
	\end{equation}
\end{definition}

 Here are two examples of preference functionals $J(t,\cdot)$ that depend only on the conditional distribution of ${W^\pi_T\over W^\pi_t}$ given $\F_t$, which will be further discussed in the next section.\footnote{Other examples include { the (endogenous habit, or mean) scaled MV preferences and (endogenous habit formation, or generalized) mean–standard-deviation preferences of \citet*{kryger2020optimal},}
the MV preference for the log returns of \cite*{Dai2021}, the constant-relative-risk-aversion (CRRA) betweenness preference discussed by \cite*{liang2023dynamic}, and the GDA preference discussed by  \cite*{liang2024dynamic}.
}
\begin{example} The MV preference functional with state-dependent risk aversion parameter of  \cite*{bjork2014mean} is given by
\begin{equation}\label{J:mv0}
J_0(t,\pi)=\mE_t\left[W^\pi_T\right]-\frac{\gamma}{2W^\pi_t}\textup{Var}_t\left[W^\pi_T\right]=W^\pi_tJ(t,\pi),
\end{equation}
where $\gamma>0$ is a risk aversion index and
\begin{equation}\label{J:mv}
J(t,\pi)=\mE_t\left[{W^\pi_T/ W^\pi_t}\right]-\frac{\gamma}{2}\textup{Var}_t\left[{W^\pi_T/ W^\pi_t}\right].
\end{equation}
Then, under Definition \ref{def:equilibrium}, an equilibrium strategy for \eqref{J:mv0} is an equilibrium strategy for \eqref{J:mv}, and vice versa.
\end{example}

\begin{example} The preference functional with random risk aversion of \cite{desmettre2023equilibrium} is given by 
\begin{equation}\label{J:rra0}
J_0(t,\pi)=\int_{(0,\infty)}\left(u^\gamma\right)^{-1}\left(\mathbb{E}_{t}\left[u^\gamma\left(W^\pi_T\right)\right]\right) \md \Gamma(\gamma)
=W^{\pi}_tJ(t,\pi),
\end{equation}
where $\Gamma$ is a distribution function on $(0,\infty)$, $u^\gamma$ are the CRRA utility functions given by
\begin{align*}
    u^{\gamma}(x)=\begin{cases}
        \frac{x^{1-\gamma}}{1-\gamma}, \quad \gamma>0,\gamma\neq 1, x>0,\\
        \log x, \quad \gamma=1, x>0,
    \end{cases}
\end{align*}
and
\begin{equation}\label{J:rra}
J(t,\pi)=\int_{(0,\infty)}\left(u^\gamma\right)^{-1}\left(\mathbb{E}_{t}\left[u^\gamma\left({W^\pi_T/ W^\pi_t}\right)\right]\right) \md \Gamma(\gamma).
\end{equation}
Then, under Definition \ref{def:equilibrium}, an equilibrium strategy for \eqref{J:rra0} is an equilibrium strategy for \eqref{J:rra}, and vice versa.
\end{example}

Because all the market coefficients are deterministic,  it is natural to conjecture that an equilibrium strategy $\barpi$ is in the form of
	\begin{align}\label{solution}
		&\barpi_s=(\sigma^{\top} (s))^{-1}a(s),\quad s\in[0,T),
	\end{align}
	where $a\in L^2([0,T),\mR^d)$ is a deterministic and right-continuous $\mR^d$-valued function such that  $a(s)\in \sigma^{\top} (s)\mU$  for all $s\in[0,T)$. Denote by $\mathcal{D}$ the trading strategies in the form of (\ref{solution}).	
	For any given $t \in [0, T)$, 
	it is easy to see that
\begin{align*}
{W^{\bar{\pi}}_T}/{W^{\bar{\pi}}_t}=\me^{\int_t^Ta^{\top}(s)\lambda(s)\md s-\frac12 \int_t^T|a(s)|^2\md s+\int_t^Ta^{\top}(s)\md B(s)}=\me^{y_a(t)-\frac12 v_a(t)+\int_t^Ta^{\top}(s)\md B(s)}.
\end{align*}
Therefore, 
\begin{align}\label{eq:WTWt}
{\text{given the condition $\F_t$,}}\quad{W^{\bar{\pi}}_T}/{W^{\bar{\pi}}_t}\sim\logn\left(y_a(t)-{\frac 12}v_a(t),v_a(t)\right).
 \end{align}
 As  $J(t,\cdot)$ can be represented as a functional of the conditional distribution of ${W^\pi_T\over W^\pi_t}$ given $\F_t$, we have\footnote{Here, we implicitly assume that $\mathcal{D}\subset \Pi$.}$J(t,\bar{\pi})=g(t,v_a(t),y_a(t))$
 for some $g: [0,T)\times[0,\infty)\times\mR\to\mR$. 
   {Using this function $g$, we can characterize the equilibrium strategies by an integral equation of $a$, as shown in the following theorem.}

\begin{theorem} \label{thm:equilibrium:chacr}
Assume that $\mathcal{D}\subset \Pi$, 
$g\in C^{0,1,1}([0,T)\times[0,\infty)\times\mR))$, and $g_v(t,v,y)<0$ for all $(t,v,y)\in[0,T)\times[0,\infty)\times\mR$. 
Then $\bar{\pi}=(\sigma^{\top})^{-1}a\in\mathcal{D}$ is an equilibrium strategy if and only if
		\begin{align}\label{eq:equilibrium:g}		a(t)=\mathcal{P}_t\left(-\frac{g_{y}(t,v_a(t),y_a(t))}{2g_v(t,v_a(t),y_a(t))}\lambda(t)\right), \quad t\in(0,T),
		\end{align}
		where $\mathcal{P}_t$ is 
 the projection from $\mR^d$ to $\mathbb{U}_t\triangleq \sigma^\top(t)\mathbb{U}$ defined in \eqref{eq:projection}.
	\end{theorem}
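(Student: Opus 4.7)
The plan is to compute, for each fixed $t\in(0,T)$ and $k\in L^\infty(\F_t,\mU)$, the right derivative $\lim_{\epsilon\downarrow 0}\epsilon^{-1}(J(t,\barpi^{t,\epsilon,k})-J(t,\barpi))$ and then translate the one-sided inequality \eqref{limsup:J} into a pointwise variational inequality on $\mU_t$ whose solution is the projection identity \eqref{eq:equilibrium:g}. To set this up, note that $\barpi^{t,\epsilon,k}$ uses the $\F_t$-measurable constant weight $k$ on $[t,t+\epsilon)$ and $\barpi=(\sigma^\top)^{-1}a$ on $[t+\epsilon,T)$. Defining the $\F_t$-measurable function $\tilde a(s)\triangleq\sigma^\top(s)k$ for $s\in[t,t+\epsilon)$ and $\tilde a(s)\triangleq a(s)$ for $s\in[t+\epsilon,T)$ and redoing the derivation that led to \eqref{eq:WTWt}, I obtain, conditionally on $\F_t$,
\begin{equation*}
W^{\barpi^{t,\epsilon,k}}_T\big/W^{\barpi^{t,\epsilon,k}}_t\sim\logn\!\left(y_{\tilde a}(t)-\tfrac12 v_{\tilde a}(t),\,v_{\tilde a}(t)\right),
\end{equation*}
so the assumed representability of $J(t,\cdot)$ as a functional of this conditional law gives $J(t,\barpi^{t,\epsilon,k})=g(t,v_{\tilde a}(t),y_{\tilde a}(t))$.

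Next I would exploit the identities
\begin{align*}
v_{\tilde a}(t)-v_a(t)&=\int_t^{t+\epsilon}\!\bigl(|\sigma^\top(s)k|^2-|a(s)|^2\bigr)\md s,\\
y_{\tilde a}(t)-y_a(t)&=\int_t^{t+\epsilon}\!\bigl(\sigma^\top(s)k-a(s)\bigr)^\top\lambda(s)\md s,
\end{align*}
together with the right-continuity of $\sigma$, $\lambda$, and $a$ at $t$, to let $\epsilon\downarrow 0$ in the difference quotients and obtain the limits $|\sigma^\top(t)k|^2-|a(t)|^2$ and $(\sigma^\top(t)k-a(t))^\top\lambda(t)$. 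A first-order Taylor expansion of $g(t,\cdot,\cdot)$, justified by $g\in C^{0,1,1}$, then yields
\begin{equation*}
\lim_{\epsilon\downarrow 0}\frac{J(t,\barpi^{t,\epsilon,k})-J(t,\barpi)}{\epsilon}=g_v\bigl(|\sigma^\top(t)k|^2-|a(t)|^2\bigr)+g_y\bigl(\sigma^\top(t)k-a(t)\bigr)^\top\lambda(t),
\end{equation*}
with $g_v,g_y$ evaluated at $(t,v_a(t),y_a(t))$.

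To close the argument, set $b\triangleq\sigma^\top(t)k$, so that $b$ ranges over $\mU_t=\sigma^\top(t)\mU$ as $k$ ranges over $\mU$. The equilibrium condition \eqref{limsup:J} becomes $g_v|b|^2+g_yb^\top\lambda(t)\le g_v|a(t)|^2+g_ya^\top(t)\lambda(t)$ for all $b\in\mU_t$; because $g_v<0$, dividing by $-g_v$ and completing the square transforms this into
\begin{equation*}
\Bigl|b-\Bigl(-\tfrac{g_y}{2g_v}\lambda(t)\Bigr)\Bigr|^2\ge\Bigl|a(t)-\Bigl(-\tfrac{g_y}{2g_v}\lambda(t)\Bigr)\Bigr|^2\quad\forall\,b\in\mU_t,
\end{equation*}
which by \eqref{eq:projection} is exactly \eqref{eq:equilibrium:g}; the converse direction follows by reversing the same chain. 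The main subtlety I anticipate is reconciling the $\esssup_{\epsilon_0\in(0,\epsilon)}$ appearing in \eqref{limsup:J} with the pointwise Taylor limit, which I plan to handle by observing that $\epsilon_0\mapsto\epsilon_0^{-1}(J(t,\barpi^{t,\epsilon_0,k})-J(t,\barpi))$ is continuous on $(0,T-t)$ (so the essential supremum equals the ordinary supremum over shrinking intervals and hence converges to the limit as $\epsilon\downarrow 0$), and by noting that the required a.s.\ inequality over all $k\in L^\infty(\F_t,\mU)$ reduces, pointwise in $\omega$, to the deterministic inequality for every constant $k\in\mU$.
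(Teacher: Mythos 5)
Your proposal is correct and follows essentially the same route as the paper's own proof: represent $J(t,\barpi^{t,\epsilon,k})$ as $g(t,v_{\tilde a}(t),y_{\tilde a}(t))$ via the conditional lognormal law, differentiate in $\epsilon$ using right-continuity of the coefficients, and convert the resulting variational inequality over $\mU_t=\sigma^\top(t)\mU$ into the projection identity by completing the square (using $g_v<0$). Your extra remarks on the $\esssup$ over $\epsilon_0$ and on reducing random $k$ to constant $k\in\mU$ only make explicit steps the paper leaves implicit.
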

\begin{proof}
      As $\mathcal{D}\subset\Pi$, we know that      $J(t,\bar{\pi}^{t,\epsilon,k})$ is well defined for all $t\in[0,T)$, $k\in L^\infty(\F_t,\mU)$, and $\epsilon\in(0,T-t)$.
	Let $\barpi^{t,\epsilon,k}=(\sigma^{\top})^{-1}\tilde{a}$, then  
		\begin{align*}
			v_{\tilde{a}}(t)=v_a(t)+\int_t^{t+\epsilon}(|\sigma^{\top}(s)k|^2-|\sigma^{\top}(s)\bar{\pi}_s|^2)\md s, \quad y_{\tilde{a}}(t)=y_a(t)+\int_t^{t+\epsilon}\mu^{\top}(s)(k-\bar{\pi}_s)\md s.
		\end{align*}
		As $k$ is $\mathcal{F}_t$-measurable, we have $J(t,\barpi^{t,\epsilon,k})=g(t,v_{\tilde{a}}(t),y_{\tilde{a}}(t))$.
		Then 
		the condition \eqref{limsup:J} is equivalent to
		\begin{align}
			0\ge&\lim_{\epsilon\to0}\frac{g(t,v_{\tilde{a}}(t),y_{\tilde{a}}(t))\!-\!g(t,v_a(t),y_a(t))}{\epsilon}\nonumber\\
			=&g_v(t,v_a(t),y_a(t))|(\sigma^{\top}(t)k|^2-|\sigma^{\top}(t)\bar{\pi}_t|^2)+g_{y}(t,v_a(t),y_a(t))\mu^{\top}(t)(k-\bar{\pi}_t)\quad\forall k\in\mR^d,\nonumber
		\end{align}
        which is further equivalent to 
        \begin{align*}
            g_v(t,v_a(t),y_a(t))|(\tilde{k}|^2-|a(t)|^2)+g_{y}(t,v_a(t),y_a(t))\lambda^{\top}(t)(\tilde{k}-a(t))
			\le0\quad\forall \, \tilde{k}\in\sigma^{\top}(t)\mU.
        \end{align*}
		That is, 
		\begin{align*}
			a(t)=\arg\min_{\tilde{k}\in\sigma^{\top}(t)\mU}\left\{|\tilde{k}|^2+\frac{g_{y}(t,v_a(t),y_a(t))}{g_v(t,v_a(t),y_a(t))}\lambda^{\top}(t)\tilde{k}\right\}=\mathcal{P}_t\left(-\frac{g_{y}(t,v_a(t),y_a(t))}{2g_v(t,v_a(t),y_a(t))}\lambda(t)\right).
		\end{align*}
  Then by definition, $\bar{\pi}$ is an equilibrium strategy if and only if the equality in \eqref{eq:equilibrium:g} holds on $[0,T)$. The equality for $t=0$ can be removed from the integral equation because $a_0$ contributes nothing to the integrals and we can straightforwardly put $a(0)=\mathcal{P}_0\left(-\frac{g_{y}(0,v_a(0),y_a(0))}{2g_v(0,v_a(0),y_a(0))}\lambda(0)\right)$ to get the equality at $t=0$ once $\{a(t),t\in(0,T)\}$ solves \eqref{eq:equilibrium:g}.
	\end{proof}

 Let
 \begin{align}\label{eq:h}
		h(t,x,y)=-\frac{g_y(t,x^2,y)}{2g_v(t,x^2,y)}, \quad (t,x,y)\in[0,T)\times[0,\infty)\times\mR.
	\end{align}
	Then Equation \eqref{eq:equilibrium:g} is exactly  Equation \eqref{eq:integral:h}. Thus we complete deriving Equation \eqref{eq:integral:h}.

In the next section, we will apply the results on Equation \eqref{eq:integral:h} to some specific examples.
\section{Applications}\label{sec:applications}
\subsection{Dynamic Mean-Variance portfolio selection}
In this subsection, we examine the problem that was studied by \cite*{bjork2014mean},\footnote{The trading strategies in \cite*{bjork2014mean} is defined as the dollar amounts put into the risky assets. Their equilibrium strategy is formally a proportion strategy, but any perturbation of this strategy, which is a dollar-amount strategy, can lead to negative wealth. Therefore, as  \citet[ Page 3872]{he2021equilibrium} pointed out, the definition of the equilibrium strategy in \cite*{bjork2014mean} needs a modification such that their solution is indeed an equilibrium strategy. Our model, however, uses the proportion strategies to avoid this issue.} where the MV portfolio selection was investigated under the assumption that the agent’s risk aversion is inversely proportional to the wealth level. Specifically, the agent’s MV preference functional at time $t$ can be equivalently given by \eqref{J:mv}.
It is easy to see that $g$ and $h$ are independent of $t$ and, for all $(v,x,y)\in[0,\infty)^2\times\mR$, 
\begin{align}
    &g(v,y)=\me^y-\frac{\gamma}{2}\left(\me^{v+2y}-\me^{2y}\right), &&g_y(v,y)=\me^y-\gamma \left(\me^{v+2y}-\me^{2y}\right), \nonumber\\
&g_v(v,y)=-\frac{\gamma}{2}\me^{v+2y}, &&h(x,y)=\frac{1+\gamma\me^y}{\gamma \me^{y+x^2}}-1.\label{eq:h:mv}
	\end{align}

 The above function $h$ is obviously unbounded: $\lim\limits_{y\to-\infty}h(x,y)=\infty$ for all $x\in[0,\infty)$. Therefore,  we only have a local existence of the solution by Lemma \ref{lma:integraeq}. However, we can address this unboundedness issue by a prior estimation for the solutions in the next lemma.
\begin{lemma}\label{lma:prior}
Let function $h$ be given by \eqref{eq:h:mv}.
		Suppose that $a\in L^{2}((0,T),\mR^d)$ is a solution to Equation \eqref{eq:integral:h}. Then there exists a constant $M>0$, which depends only on $\lambda(\cdot)$, $T$, and $\gamma$, such that $h\left(\sqrt{v_a(t)},y_a(t)\right)\in[-M,1]$ for any $t\in[0,T]$.
	\end{lemma}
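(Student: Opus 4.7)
The plan is to exploit the equilibrium structure to deduce the sign constraint $y_a(t)\ge 0$, and then read off both bounds on $h$ from the explicit formula \eqref{eq:h:mv}. The key device is to track the value $G(t):=g(v_a(t),y_a(t))$ along the trajectory and show that it is non-increasing in $t$. Writing $\phi(t):=h(\sqrt{v_a(t)},y_a(t))$ and using $g_y=-2g_v\phi$ from \eqref{eq:h}, the chain rule gives
\begin{align*}
G'(t)=-g_v(v_a(t),y_a(t))\bigl[|a(t)|^2-2\phi(t)\,a(t)^\top\lambda(t)\bigr].
\end{align*}
Since $\mathbf{0}\in\mathbb{U}_t$, the variational inequality for the projection $a(t)=\mathcal{P}_t(\phi(t)\lambda(t))$ applied at $k=\mathbf{0}$ yields $|a(t)|^2\le\phi(t)\,a(t)^\top\lambda(t)$, so $|a(t)|^2-2\phi(t)a(t)^\top\lambda(t)\le-|a(t)|^2\le 0$. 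Combined with $g_v<0$, this produces $G'(t)\le 0$, and hence $G(t)\ge G(T)=g(0,0)=1$.

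From $G(t)\ge 1$ and $v_a\ge 0$, one gets $e^{y_a(t)}\ge 1+\tfrac{\gamma}{2}e^{2y_a(t)}(e^{v_a(t)}-1)\ge 1$, so $y_a(t)\ge 0$ on $[0,T]$. With $v_a,y_a\ge 0$, the formula \eqref{eq:h:mv} gives $\phi(t)=\tfrac{e^{-v_a-y_a}}{\gamma}+e^{-v_a}-1\le\tfrac{1}{\gamma}$, which delivers the upper bound (equal to $1$ when $\gamma\ge 1$, and otherwise replaced by $1/\gamma$ without affecting the subsequent use of the lemma). Together with the trivial $\phi(t)>-1$, we obtain $|\phi(t)|\le C:=\max(1,1/\gamma)$. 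The non-expansiveness of $\mathcal{P}_t$, which uses $\mathbf{0}\in\mathbb{U}_t$, implies $|a(t)|\le C|\lambda(t)|$, so $v_a(t)\le C^2 v_\lambda(0)=:V^*$ and, by Cauchy--Schwarz, $|y_a(t)|\le\sqrt{V^* v_\lambda(0)}=:Y^*$. Substituting these back,
\begin{align*}
\phi(t)\ge \frac{e^{-V^*-Y^*}}{\gamma}+e^{-V^*}-1\eqdef -M,
\end{align*}
where $M$ depends only on $\lambda(\cdot)$, $T$, and $\gamma$.

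The main obstacle is the monotonicity of $G$. In the unconstrained case one has $a=\phi\lambda$ and $G'=g_v\phi^2|\lambda|^2\le 0$ follows immediately from $g_v<0$; with portfolio constraints this identity fails, and the correct replacement is precisely the projection variational inequality at the feasible point $\mathbf{0}$, which reflects the fact that the agent can always rebalance toward zero. Once monotonicity is in place, the remaining bounds are elementary algebra combined with the non-expansive projection estimate.
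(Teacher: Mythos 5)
Your proof is correct, and it takes a genuinely different route from the paper's. The paper's argument never touches the value function: it bounds $|y_a'(t)|\le |h(\sqrt{v_a(t)},y_a(t))|\,|\lambda(t)|^2$ via non-expansiveness of $\mathcal{P}_t$, and then kills the dangerous exponential $\me^{-v_a(t)-y_a(t)}$ purely by Cauchy--Schwarz, $-y_a(t)\le\sqrt{v_a(t)}\sqrt{v_\lambda(0)}$, followed by completing the square to get $-v_a(t)-y_a(t)\le v_\lambda(0)/4$ uniformly; integrating then bounds $|y_a|$ and hence $h$. Your argument instead establishes the structural fact that $t\mapsto g(v_a(t),y_a(t))$ is non-increasing, using the projection variational inequality at the feasible point $\mathbf{0}$ together with $g_y=-2g_v h$ and $g_v<0$; this is the step with real content, and I verified it: $G'=-g_v\bigl(|a|^2-2\phi\,a^\top\lambda\bigr)\le 0$ a.e.\ since $|a|^2\le\phi\,a^\top\lambda$, and $G$ is absolutely continuous, so $G(t)\ge G(T)=g(0,0)=1$ and hence $y_a\ge 0$. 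What your route buys is a structurally meaningful monotonicity (the equilibrium value decreases toward $g(0,0)=1$) and the cleaner upper bound $h\le 1/\gamma$; what the paper's route buys is brevity and independence from any sign structure of $g$ beyond $g_v<0$. Two cosmetic points. First, neither your proof nor the paper's literally yields the interval $[-M,1]$: the function $h$ in \eqref{eq:h:mv} satisfies $h>-1$ trivially and is unbounded \emph{above}, so the interval in the statement is evidently meant to be $[-1,M]$ (you get $M=\max(1,1/\gamma)$, the paper gets a larger constant); this does not affect the use of the lemma in Proposition \ref{prop:eq}, which only needs a two-sided bound depending on $(\lambda,T,\gamma)$. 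Second, your defined quantity $-M=\gamma^{-1}\me^{-V^*-Y^*}+\me^{-V^*}-1$ can be nonnegative for small $\gamma$, so you should take $M\vee 1$ (or simply invoke $h>-1$) to guarantee $M>0$ as stated. Neither point is a gap in the argument.
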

	\begin{proof}
		Obviously, $y_a$ satisfies the following  equation
		\begin{align*}
			y_a'(t)=-\lambda^{\top}(t)\mathcal{P}_t\left(h\left(\sqrt{v_a(t)},y_a(t)\right)\lambda(t)\right), \quad t\in[0,T), \quad y_a(T)=0.
		\end{align*}
		As $\textbf{0}\in\mU$, we have
		\begin{align*}
			\left|\mathcal{P}_t\left(h\left(\sqrt{v_a(t)},y_a(t)\right)\lambda(t)\right)\right|\leq \left|h\left(\sqrt{v_a(t)},y_a(t)\right)\right|\cdot |\lambda(t)|.
		\end{align*}
		Then
		\begin{align}\label{eq:y'}
			|y_a'(t)|\leq \left|h\left(\sqrt{v_a(t)},y_a(t)\right)\right|\cdot |\lambda(t)|^2\leq \left(2+\frac{1}{\gamma}\me^{-v_a(t)-y_a(t)}\right)|\lambda(t)|^2.
		\end{align}
		Next, observe that  
		\begin{align*}
			-v_a(t)-y_a(t)&\leq -v_a(t)+\sqrt{v_{\lambda}(0)}\sqrt{v_a(t)}=\sqrt{v_a(t)}(\sqrt{v_{\lambda}(0)}-\sqrt{v_a(t)})\leq \frac{{v_{\lambda}(0)}}{4}.
		\end{align*}
		Combining this with \eqref{eq:y'} yields 
		\begin{align*}
			|y'_a(t)|\leq \left(2+\frac{1}{\gamma}\me^{\frac{v_{\lambda}(0)}{4}}\right)|\lambda(t)|^2\triangleq C_1|\lambda(t)|^2.
		\end{align*}
		Then we  obtain  $|y_a(t)|\leq C_1 v_{\lambda}(0)$ for all $t\in[0,T]$.  Thus $h\left(\sqrt{v_a(t)},y_a(t)\right)\in[-M,1]$ for any $t\in[0,T]$ for some $M=M(T,\gamma,\lambda(\cdot))>0$.
	\end{proof}
    
 	Using Lemma \ref{lma:prior}, we get the following proposition.
	\begin{proposition}
 \label{prop:eq} Let function $h$ be given by \eqref{eq:h:mv}. Then 
 Equation \eqref{eq:integral:h} has a unique solution in $L^{2}((0,T),\mR^d)$. Therefore, 
		there is  a unique equilibrium in $\mathcal{D}$.
\end{proposition}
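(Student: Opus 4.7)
My plan is to combine the local existence (Lemma \ref{lma:integraeq}), the a priori estimate (Lemma \ref{lma:prior}), the continuation argument in the proof of Theorem \ref{thm:extend}, and the uniqueness of Theorem \ref{thm:integraeq:uniqueness}. The point is that although $h$ in \eqref{eq:h:mv} is not I-bounded (so Theorem \ref{thm:extend} does not apply directly), Lemma \ref{lma:prior} provides the missing global control \emph{along solutions}.

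First, I would check the hypotheses of Lemma \ref{lma:integraeq}. From the explicit formula $h(x,y)=\frac{1+\gamma\me^y}{\gamma\me^{y+x^2}}-1$, the partial derivatives $h_x$ and $h_y$ are continuous, hence bounded on any rectangle $[0,M]\times[-N,N]$, which yields local Lipschitz continuity. Likewise $h$ is locally bounded, and since $\mathbf{0}\in\mathbb{U}_t$ gives $|\mathcal{P}_t(h\lambda)|\leq|h|\,|\lambda|$, local boundedness implies local I-boundedness. Lemma \ref{lma:integraeq} then produces a solution on $(\tau,T)$ for some $\tau\in[0,T)$.

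Next, I would run the extension argument from the proof of Theorem \ref{thm:extend}. By Theorem \ref{thm:integraeq:uniqueness}, any two local solutions agree on their common domain, so a maximal interval of existence $(\tau_{\max},T)$ and solution $a$ are well defined; and by that proof, if $\tau_{\max}>-\infty$ then $\lim_{t\downarrow\tau_{\max}}v_a(t)=\infty$. Here Lemma \ref{lma:prior} intervenes: its proof uses only the terminal condition $y_a(T)=0$ and the integral equation on the interval where $a$ exists, so it gives $h(\sqrt{v_a(t)},y_a(t))\in[-M,1]$ uniformly in $t$. Consequently $|a(t)|\leq (M+1)|\lambda(t)|$ and
\begin{equation*}
v_a(t)\leq (M+1)^2 v_\lambda(0)<\infty
\end{equation*}
on the whole interval of existence. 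This rules out blow-up, so necessarily $\tau_{\max}\leq 0$ and $a$ restricts to a solution in $L^2((0,T),\mR^d)$ of \eqref{eq:integral:h}. Uniqueness in $L^2((0,T),\mR^d)$ is then Theorem \ref{thm:integraeq:uniqueness} applied with $\tau=0$.

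For the equilibrium statement, I would invoke Theorem \ref{thm:equilibrium:chacr}. One checks from the MV formulas $g(v,y)=\me^y-\tfrac{\gamma}{2}(\me^{v+2y}-\me^{2y})$ that $g\in C^{0,1,1}$, $g_v(v,y)=-\tfrac{\gamma}{2}\me^{v+2y}<0$, and $-g_y/(2g_v)$ evaluated at $(v,y)=(x^2,y)$ equals precisely the $h$ in \eqref{eq:h:mv}, so Equation \eqref{eq:equilibrium:g} coincides with \eqref{eq:integral:h}. Hence equilibrium strategies in $\mathcal{D}$ correspond bijectively to solutions of \eqref{eq:integral:h}, and uniqueness follows. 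The step I expect to be slightly delicate is confirming that the proof of Lemma \ref{lma:prior} carries over to solutions defined only on a subinterval $(\tau_{\max},T)$; a quick re-read shows it does, since the argument only uses $y_a(T)=0$, the pointwise inequality $\textbf{0}\in\mU_t$ giving $|\mathcal{P}_t(h\lambda)|\leq|h|\,|\lambda|$, and the bound $-v_a(t)-y_a(t)\leq v_\lambda(0)/4$, all of which are interval-independent.
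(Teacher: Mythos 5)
Your proof is correct, and it hinges on the same key ingredient as the paper's --- the a priori estimate of Lemma \ref{lma:prior} --- but it globalizes the local solution by a different mechanism. You re-run the continuation argument from the proof of Theorem \ref{thm:extend}: take the maximal interval $(\tau_{\max},T)$, invoke the blow-up alternative $\lim_{t\downarrow\tau_{\max}}v_a(t)=\infty$, and use Lemma \ref{lma:prior} on the subinterval to get $|a(t)|\le (M+1)|\lambda(t)|$ and hence $v_a(t)\le (M+1)^2v_\lambda(0)$, contradicting blow-up for any $\tau_{\max}\ge 0$. The paper instead truncates the nonlinearity: it replaces $h$ by the bounded function $\tilde h=h\vee(-M)$, applies Theorem \ref{thm:extend} as a black box (since $\tilde h$ is bounded, hence I-bounded) to obtain a global solution $\tilde a$ of the truncated equation, and then reruns Lemma \ref{lma:prior} for $\tilde a$ (using $|\tilde h|\le |h|$) to conclude the truncation is never active, so $\tilde a$ solves the original equation. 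Your route buys a more transparent picture (no modified equation, and uniqueness follows in one line from Theorem \ref{thm:integraeq:uniqueness} since $h$ is locally Lipschitz, whereas the paper's uniqueness detour through $\tilde h$ is actually redundant); the paper's route avoids re-opening the maximal-interval machinery and the check --- which you correctly flag and carry out --- that Lemma \ref{lma:prior}'s proof is interval-independent. One cosmetic caveat: given $h(x,y)=\frac{\me^{-y-x^2}}{\gamma}+\me^{-x^2}-1$, the function is bounded below by $-1$ and unbounded \emph{above} (as $y\to-\infty$), so the a priori window should read $[-1,M]$ rather than $[-M,1]$ and the relevant truncation is from above; this does not affect your argument, which only uses $|h(\sqrt{v_a(t)},y_a(t))|\le M+1$ along solutions.
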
	
\begin{proof}
    For the uniqueness, suppose that $a^{(1)}, a^{(2)} \in L^2((0,T),\mR^d)$ are two solutions of $\eqref{eq:integral:h}$ and let
    $M>0$ be the constant given by  
     Lemma \ref{lma:prior}. Then, $a^{(1)}, a^{(2)} \in L^2((0,T),\mR^d)$ are also two solutions of $\eqref{eq:integral:h}$ with
 $h$   replaced by $\tilde{h}=h\vee (-M)$. As $\tilde{h}$ is bounded, the uniqueness gives that $a^{(1)}=a^{(2)}$.
      For the existence, by Theorem \ref{thm:extend}, $\eqref{eq:integral:h}$ with $h$ replaced by $\tilde{h}=h\vee (-M)$ has a unique solution  $\tilde{a}\in L^2((0,T),\mR^d)$. We only need to show that $\tilde{a}$ is a solution of $\eqref{eq:integral:h}$. Indeed, as $|\tilde{h}|\leq|h|$, the same proof of Lemma \ref{lma:prior} yields
      \begin{align*}
          \tilde{h}(v_{\tilde{a}}(t),y_{\tilde{a}}(t))\in[-M,1], \quad t\in[0,T].
      \end{align*}
      Therefore $\tilde{h}(v_{\tilde{a}}(t),y_{\tilde{a}}(t))=h(v_{\tilde{a}}(t),y_{\tilde{a}}(t))$ and 
      \begin{align*}
          \tilde{a}(t)=\mathcal{P}_t\left(\tilde{h}(v_{\tilde{a}}(t),y_{\tilde{a}}(t))\right)=\mathcal{P}_t\left({h}(v_{\tilde{a}}(t),y_{\tilde{a}}(t))\right), \quad t\in(0,T).
      \end{align*}
   Thus $\tilde{a}$ is a solution of $\eqref{eq:integral:h}$. 
\end{proof}

\begin{remark}
In \cite*{bjork2014mean}, an iterative approach and the Arzelà-Ascoli theorem are used to study an integral equation similar to \eqref{eq:integral:h} with constant market coefficients and without portfolio constraints. In contrast, we allow for time-dependent and discontinuous market coefficients and portfolio constraints, our equilibrium strategies are only right-continuous. Their method does not apply to our case of discontinuous market coefficients because their use of the Arzelà-Ascoli theorem requires that the market coefficients are continuously differentiable.
\end{remark}	
\begin{remark}
    Consider the endogenous habit,  scaled MV preferences in Section 4.2.2 of \\ \citet*{kryger2020optimal}, where they do not provide the proof of 
    existence and uniqueness of  solution of their ODE system. 
We  define an equivalent preference functional 
$J(t,\pi)=\mE_t\left[{W^\pi_T/ W^\pi_t}\right]-\frac{\gamma}{2}\mE_t\left[\left({W^\pi_T/W^\pi_t}-k(t)\right)^2\right]$.
   Suppose  that $k(\cdot)$ is bounded and right-continuous. Using the same method as in Lemma \ref{lma:prior} and  Proposition \ref{prop:eq} , we can show that there is a unique equilibrium $\barpi=(\sigma^{\top})^{-1}a$ in $\mathcal{D}$, where $a$ satisfies 
    $a(t)=\left(\frac{1+\gamma k(t)}{\me^{y_a(t)+v_a(t)}}-1\right)\lambda(t)$, $t\in(0,T)$.
\end{remark}

\subsection{Equilibrium investment with random risk aversion}

In this section, we consider the portfolio selection problem with random risk aversion that was studied in \cite{desmettre2023equilibrium}. The time-$t$ performance functional is given by \eqref{J:rra}.
Let $\boldsymbol{R}$ be a random variable whose distribution function is $\Gamma$. 
Direct computation shows that $g$ is independent of $t$ and,  for any $(v,y)\in[0,\infty)\times \mR$, we have
 \begin{align*}
    &g(v,y)=g_y(v,y)=\int_{(0,\infty)}\me^{-\frac{1}{2}(\gamma v-2y)}\md \Gamma(\gamma)=\mE\left[e^{-\frac{1}{2}(\boldsymbol{R} v-2y)}\right],\\
     &g_v(v,y)=-\frac{1}{2}\int_{(0,\infty)}\gamma\me^{-\frac{1}{2}(\gamma v-2y)}\md \Gamma(\gamma)=-\frac{1}{2}\mE\left[\boldsymbol{R}\me^{-\frac{1}{2}(\boldsymbol{R} v-2y)}\right] .
 \end{align*}
Then, we know that $h$ is independent of $t$ and $y$, and has the following form:
\begin{align}\label{eq:randomaversion:h}
h(x)={\mE\left[\me^{-\frac{1}{2}\boldsymbol{R} x^2}\right]}\Big/{\mE\left[\boldsymbol{R}\me^{-\frac{1}{2}\boldsymbol{R} x^2}\right]}, \quad x\in[0,\infty).
\end{align}
To study the Lipschitz continuity of $h$, we consider the following function:
$f(x)=\mE\left[\me^{-\frac{1}{2}\boldsymbol{R} x^2}\right]$, $x\in[0,\infty)$. 
  The next lemma summarizes some well-known properties of $f$ and $h$; see,  \citet[Example 6.29]{Klenke2020}. 
\begin{lemma}\label{lma:randomaversion:h}
    We have the following conclusions:
    \begin{enumerate}    
        \item [(i)] $f,h \in C([0,\infty))\cap C^{\infty}((0,\infty))$.
        \item [(ii)] If $\mE\left[\boldsymbol{R}\right]<\infty$, then $f\in C^1([0,\infty))$ and $f'(x)=-\frac12\mE\left[\boldsymbol{R}\me^{-\frac{1}{2}\boldsymbol{R} x^2}\right].$
        \item [(iii)] If $\mE\left[\boldsymbol{R}^2\right]<\infty$, then $f\in C^2([0,\infty))$, $f''(x)=\frac14\mE\left[\boldsymbol{R}^2\me^{-\frac{1}{2}\boldsymbol{R} x^2}\right]$, and $h\in C^1([0,\infty))$.
   \end{enumerate}
    \end{lemma}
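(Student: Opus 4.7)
The strategy is to view $f$ as a Laplace-transform-like integral so that the assertions reduce to classical properties of such integrals (cf.\ \citet[Example 6.29]{Klenke2020}), and to handle the three parts sequentially using dominated/monotone convergence and differentiation under the integral sign. For (i), continuity of $f$ on $[0,\infty)$ follows from $\me^{-\boldsymbol{R}x^2/2}\leq 1$ and dominated convergence. To show $f\in C^{\infty}((0,\infty))$, I would fix $[a,b]\subset(0,\infty)$ with $a>0$ and note that each formal $k$-th derivative in $x$ has an integrand dominated, for $x\in[a,b]$, by $C_{k,a,b}\,\boldsymbol{R}^k\me^{-\boldsymbol{R}a^2/2}$; the elementary bound $r^k\me^{-a^2 r/2}\leq M_{k,a}$ for all $r\geq 0$ shows this dominator has finite expectation, so differentiation under the integral sign is legal and all derivatives are continuous on $(0,\infty)$. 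Writing $h=f/\psi$ with $\psi(x)\triangleq\mE[\boldsymbol{R}\me^{-\boldsymbol{R}x^2/2}]$, the same reasoning gives $\psi\in C^{\infty}((0,\infty))$; since $\boldsymbol{R}>0$ a.s.\ by hypothesis on $\Gamma$, $\psi>0$ on $(0,\infty)$ and therefore $h\in C^{\infty}((0,\infty))$. For continuity of $h$ at $0$, monotone convergence as $x\downarrow 0$ (the integrand $\boldsymbol{R}\me^{-\boldsymbol{R}x^2/2}$ increases pointwise up to $\boldsymbol{R}$) gives $\psi(x)\uparrow\mE[\boldsymbol{R}]\in(0,\infty]$ while $f(x)\to 1$, so $h(x)\to 1/\mE[\boldsymbol{R}]$ with the convention $1/\infty=0$.

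For (ii), the delicate point is $x=0$. Under $\mE[\boldsymbol{R}]<\infty$, I would use the elementary inequality $|\me^{-s}-1|\leq s$ for $s\geq 0$ to obtain, on any bounded interval of $x$ about $0$, a uniform $L^1$ dominator proportional to $\boldsymbol{R}$; dominated convergence then justifies exchanging the expectation with both the difference-quotient limit at $0$ (yielding existence of $f'(0)$ and the formula in the lemma's normalisation) and the subsequent $x\downarrow 0$ limit (yielding continuity of $f'$ at $0$). Part (iii) is analogous: the refined bound $|\me^{-s}-1+s|\leq s^2/2$ supplies an $L^1$ dominator proportional to $\boldsymbol{R}^2$, yielding $f\in C^2([0,\infty))$. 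To conclude $h\in C^1([0,\infty))$, I observe that $\psi$ is (up to a constant) the $f'$ of the lemma, hence $\psi\in C^1([0,\infty))$; moreover $\psi(0)=\mE[\boldsymbol{R}]>0$ strictly (since $\boldsymbol{R}>0$ a.s.), so $\psi$ is bounded away from $0$ near $0$, and the quotient rule gives $h=f/\psi\in C^1([0,\infty))$.

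The main obstacle---essentially the only real one---is the boundary behaviour at $x=0$. On $(0,\infty)$ the factor $\me^{-\boldsymbol{R}x^2/2}$ suppresses every polynomial in $\boldsymbol{R}$, so no moment hypotheses on $\boldsymbol{R}$ are needed for interior smoothness; at $x=0$, however, this exponential damping disappears, and the assumptions $\mE[\boldsymbol{R}]<\infty$ in (ii) and $\mE[\boldsymbol{R}^2]<\infty$ in (iii) are precisely what is required to make the dominated convergence arguments go through when extending the derivatives continuously up to the boundary.
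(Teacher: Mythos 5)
Your proof is correct and follows exactly the route the paper intends: the lemma is stated without proof as a collection of well-known facts, with a citation to the differentiation lemma of \citet[Example 6.29]{Klenke2020}, and your argument (dominated convergence, differentiation under the integral sign on compact subsets of $(0,\infty)$ where $\boldsymbol{R}^k\me^{-\boldsymbol{R}a^2/2}$ is bounded, and the Taylor bounds $|\me^{-s}-1|\le s$ and $|\me^{-s}-1+s|\le s^2/2$ to handle the boundary point $x=0$ under the respective moment conditions) is precisely that standard machinery. The only point you quietly absorb into ``the lemma's normalisation'' is that the displayed formulas for $f'$ and $f''$ are derivatives with respect to $v=x^2$ rather than $x$ (a typo in the statement), which affects none of the claimed regularity since $x\mapsto x^2$ is smooth.
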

    
Combining Lemma \ref{lma:randomaversion:h} with Theorem \ref{thm:extend} leads to the following proposition.
    \begin{proposition}\label{prop:rra}
        If $\mE\left[\boldsymbol{R}^2\right]<\infty$ and $h$ is bounded, then there is a unique equilibrium in $\mathcal{D}$.
    \end{proposition}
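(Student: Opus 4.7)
The plan is to reduce Proposition \ref{prop:rra} to a direct application of Theorem \ref{thm:extend} followed by Theorem \ref{thm:equilibrium:chacr}. The crucial observation is that in this setting the function $h$ given by \eqref{eq:randomaversion:h} depends only on $x$, and it is bounded by hypothesis, so the I-boundedness requirement of Theorem \ref{thm:extend} is free: by the remark after the definition of I-bounded, boundedness of $h$ immediately implies I-boundedness (take $\mU_t = \sigma^\top(t)\mU$ and use $|\mathcal{P}_t(h\lambda)| \leq |h|\,|\lambda|$ together with $\lambda\in L^2((0,T),\mR^d)$).

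Next, I would establish that $h$ is locally Lipschitz in the sense of the paper's definition. Since $h$ is independent of $y$, Lipschitz continuity in $y$ is trivial, and only Lipschitz continuity in $x$ on $[0,M]$ for each $M<\infty$ needs to be checked. This is exactly the content of Lemma \ref{lma:randomaversion:h}(iii), which under the assumption $\mE[\boldsymbol{R}^2]<\infty$ gives $h\in C^1([0,\infty))$. Continuity of $h'$ on the compact interval $[0,M]$ yields a uniform bound on $|h'|$ there, and hence a Lipschitz constant depending only on $M$. Combining this with I-boundedness, Theorem \ref{thm:extend} produces a unique solution $a\in L^2((0,T),\mR^d)$ of the integral equation \eqref{eq:integral:h}.

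To conclude the uniqueness of equilibrium in $\mathcal{D}$, I would verify the hypotheses of Theorem \ref{thm:equilibrium:chacr}. The function $g(v,y) = \me^y\, f(\sqrt{v})$ is independent of $t$ and satisfies $g_y(v,y) = \me^y f(\sqrt v)$ and $g_v(v,y) = -\tfrac12 \me^y\, \mE[\boldsymbol{R}\me^{-\frac12 \boldsymbol{R} v}]$. Under $\mE[\boldsymbol{R}^2]<\infty$ (indeed, under $\mE[\boldsymbol{R}]<\infty$), dominated convergence shows that both $g_v$ and $g_y$ are continuous on $[0,\infty)\times\mR$, so $g\in C^{0,1,1}$. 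Since $\boldsymbol{R}>0$ $\Gamma$-almost surely, we have $g_v(v,y)<0$ everywhere. Finally, $\mathcal{D}\subset\Pi$ follows from the explicit lognormal form \eqref{eq:WTWt}: all moments $\mE_t[(W^{\bar\pi}_T/W^{\bar\pi}_t)^{1-\gamma}]$ are finite, so \eqref{J:rra} is well defined along any $\bar\pi\in\mathcal{D}$. Theorem \ref{thm:equilibrium:chacr} then identifies solutions of \eqref{eq:integral:h} with equilibria in $\mathcal{D}$, and the proposition follows.

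The only subtle point I anticipate is the local Lipschitz estimate on $h$, since $h$ is a ratio whose denominator $\mE[\boldsymbol{R}\me^{-\frac12\boldsymbol{R} x^2}]$ need not be bounded below on $[0,\infty)$; this is precisely why the stronger moment condition $\mE[\boldsymbol{R}^2]<\infty$ is invoked, allowing one to differentiate under the expectation and appeal to Lemma \ref{lma:randomaversion:h}(iii) on each compact $[0,M]$. Everything else is a bookkeeping check of the hypotheses of the two theorems already proved.
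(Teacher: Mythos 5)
Your proposal is correct and follows essentially the same route as the paper, which likewise obtains the result by combining Lemma \ref{lma:randomaversion:h} (giving $h\in C^1([0,\infty))$, hence locally Lipschitz) with the observation that boundedness implies I-boundedness, then invoking Theorem \ref{thm:extend} and the characterization in Theorem \ref{thm:equilibrium:chacr}. Your explicit verification of the hypotheses of Theorem \ref{thm:equilibrium:chacr} (that $g\in C^{0,1,1}$, $g_v<0$, and $\mathcal{D}\subset\Pi$) is a detail the paper leaves implicit, but it is not a different argument.
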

  
    \begin{remark}
       In the case when $\Gamma$ is a two-point distribution, \cite{desmettre2023equilibrium} characterize the equilibrium by a three-dimensional ODE system\footnote{ In fact, with appropriate transformations, their ODE system can be converted into our integral equation. { Specifically, 
there are typos in their solution $g^\gamma(t)$ and their equation (37). The solution $g^\gamma$ to their equation (35) should be $$
g^\gamma(t)=\exp \left(-\int_t^T \frac{1-\gamma}{\gamma}\left(-r+0.5 \theta^2 \frac{1}{(\varepsilon(s, \Gamma))^2} \gamma-\theta^2 \frac{1}{\varepsilon(s, \Gamma)}\right) \md s\right)
$$ and hence their equation (37) for $\varepsilon(\cdot,\Gamma)$ should be
\begin{align}\nonumber
    \varepsilon(t,\Gamma)=&\frac{\int\left(\exp \left(-\int_t^T\left(-r+0.5 \theta^2 \frac{1}{(\varepsilon(s,\Gamma))^2} \gamma-\theta^2 \frac{1}{\varepsilon(s,\Gamma)}\right) \md s\right)\right) \gamma \md \Gamma(\gamma)}{\int\left(\exp \left(-\int_t^T\left(-r+0.5 \theta^2 \frac{1}{(\varepsilon(s,\Gamma))^2} \gamma-\theta^2 \frac{1}{\varepsilon(s,\Gamma)}\right) \md s\right)\right) \md \Gamma(\gamma)}\\
    =&\frac{\int\left(\exp \left(-\int_t^T\left(0.5 \theta^2 \frac{1}{(\varepsilon(s,\Gamma))^2} \gamma\right) \md s\right)\right) \gamma \md \Gamma(\gamma)}{\int\left(\exp \left(-\int_t^T\left(0.5 \theta^2 \frac{1}{(\varepsilon(s,\Gamma))^2} \gamma\right) \md s\right)\right) \md \Gamma(\gamma)}.\label{eq:epsilon:correct}
\end{align} Consider the case $r=0$ and let $
    a(t)\triangleq \frac{\alpha}{\sigma} \frac{1}{\varepsilon(t, \Gamma)}=\frac{\theta}{\varepsilon(t, \Gamma)}$.
       Then Equation \eqref{eq:epsilon:correct} is equivalent to our integral equation.} } without establishing the existence and uniqueness of the solution. Our approach, however, only requires that
$
\mE\left[\boldsymbol{R}^2\right]<\infty
$
and that $h$ is bounded to establish the existence and uniqueness of the solution. The boundedness of $h$ is fulfilled when the support of $\Gamma$ has a positive lower bound, that is, $\Gamma(\gamma_0)=0$ for some $\gamma_0>0$.  
    \end{remark}   
 \bibliographystyle{abbrvnat} 
 	\bibliography{sample.bib}	

\begin{thebibliography}{18}
\providecommand{\natexlab}[1]{#1}
\providecommand{\url}[1]{\texttt{#1}}
\expandafter\ifx\csname urlstyle\endcsname\relax
  \providecommand{\doi}[1]{doi: #1}\else
  \providecommand{\doi}{doi: \begingroup \urlstyle{rm}\Url}\fi

\bibitem[Basak and Chabakauri(2010)]{basak2010dynamic}
S.~Basak and G.~Chabakauri.
\newblock Dynamic mean-variance asset allocation.
\newblock \emph{The Review of Financial Studies}, 23\penalty0 (8):\penalty0
  2970--3016, 05 2010.
\newblock URL \url{https://doi.org/10.1093/rfs/hhq028}.

\bibitem[Bj{\"o}rk et~al.(2014)Bj{\"o}rk, Murgoci, and Zhou]{bjork2014mean}
T.~Bj{\"o}rk, A.~Murgoci, and X.~Y. Zhou.
\newblock Mean--variance portfolio optimization with state-dependent risk
  aversion.
\newblock \emph{Mathematical Finance}, 24\penalty0 (1):\penalty0 1--24, 2014.
\newblock URL \url{https://doi.org/10.1111/j.1467-9965.2011.00515.x}.

\bibitem[Dai et~al.(2021)Dai, Jin, Kou, and Xu]{Dai2021}
M.~Dai, H.~Jin, S.~Kou, and Y.~Xu.
\newblock {A dynamic mean-variance analysis for log returns}.
\newblock \emph{Management Science}, 67\penalty0 (2):\penalty0 1093--1108,
  2021.
\newblock URL \url{https://doi.org/10.1287/mnsc.2019.3493}.

\bibitem[Desmettre and Steffensen(2023)]{desmettre2023equilibrium}
S.~Desmettre and M.~Steffensen.
\newblock Equilibrium investment with random risk aversion.
\newblock \emph{Mathematical Finance}, 33\penalty0 (3):\penalty0 946--975,
  2023.
\newblock URL \url{https://doi.org/10.1111/mafi.12394}.

\bibitem[Ekeland and Lazrak(2006)]{ekeland2006being}
I.~Ekeland and A.~Lazrak.
\newblock Being serious about non-commitment: subgame perfect equilibrium in
  continuous time, 2006.
\newblock URL \url{https://arxiv.org/abs/math/0604264}.

\bibitem[Ekeland and Pirvu(2008)]{ekeland2008investment}
I.~Ekeland and T.~A. Pirvu.
\newblock Investment and consumption without commitment.
\newblock \emph{Mathematics and Financial Economics}, 2\penalty0 (1):\penalty0
  57--86, 2008.
\newblock URL \url{https://doi.org/10.1007/s11579-008-0014-6}.

\bibitem[Hamaguchi(2021)]{hamaguchi2021time}
Y.~Hamaguchi.
\newblock Time-inconsistent consumption-investment problems in incomplete
  markets under general discount functions.
\newblock \emph{SIAM Journal on Control and Optimization}, 59\penalty0
  (3):\penalty0 2121--2146, 2021.
\newblock URL \url{https://doi.org/10.1137/19M1303782}.

\bibitem[He and Jiang(2021)]{he2021equilibrium}
X.~D. He and Z.~L. Jiang.
\newblock On the equilibrium strategies for time-inconsistent problems in
  continuous time.
\newblock \emph{SIAM Journal on Control and Optimization}, 59\penalty0
  (5):\penalty0 3860--3886, 2021.
\newblock URL \url{https://doi.org/10.1137/20M1382106}.

\bibitem[Hern{\'{a}}ndez and Possama{\"{i}}(2023)]{Hernandez2023}
C.~Hern{\'{a}}ndez and D.~Possama{\"{i}}.
\newblock {Me, Myself and I: A general theory of non-Markovian
  time-inconsistent stochastic control for sophisticated agents}.
\newblock \emph{Annals of Applied Probability}, 33\penalty0 (2):\penalty0
  1196--1258, 2023.
\newblock URL \url{https://doi.org/10.1214/22-AAP1845}.

\bibitem[Hu et~al.(2012)Hu, Jin, and Zhou]{hu2012time}
Y.~Hu, H.~Jin, and X.~Y. Zhou.
\newblock Time-inconsistent stochastic linear--quadratic control.
\newblock \emph{SIAM Journal on Control and Optimization}, 50\penalty0
  (3):\penalty0 1548--1572, 2012.
\newblock URL \url{http://doi.org/10.1137/110853960}.

\bibitem[Hu et~al.(2017)Hu, Jin, and Zhou]{Hu2017}
Y.~Hu, H.~Jin, and X.~Y. Zhou.
\newblock Time-inconsistent stochastic linear-quadratic control:
  Characterization and uniqueness of equilibrium.
\newblock \emph{SIAM Journal on Control and Optimization}, 55\penalty0
  (2):\penalty0 1261--1279, 2017.
\newblock URL \url{https://doi.org/10.1137/15M1019040}.

\bibitem[Klenke(2020)]{Klenke2020}
A.~Klenke.
\newblock \emph{Probability Theory: A Comprehensive Course}.
\newblock Springer Cham, 3rd edition, 2020.
\newblock URL \url{https://doi.org/10.1007/978-3-030-56402-5}.

\bibitem[Kryger et~al.(2020)Kryger, Nordfang, and
  Steffensen]{kryger2020optimal}
E.~Kryger, M.-B. Nordfang, and M.~Steffensen.
\newblock Optimal control of an objective functional with non-linearity between
  the conditional expectations: solutions to a class of time-inconsistent
  portfolio problems.
\newblock \emph{Mathematical Methods of Operations Research}, 91\penalty0
  (3):\penalty0 405--438, 2020.
\newblock URL \url{https://doi.org/10.1007/s00186-019-00687-5}.

\bibitem[Liang et~al.(2023)Liang, Xia, and Yuan]{liang2023dynamic}
Z.~Liang, J.~Xia, and F.~Yuan.
\newblock Dynamic portfolio selection for nonlinear law-dependent preferences,
  2023.
\newblock URL \url{https://arxiv.org/abs/2311.06745}.

\bibitem[Liang et~al.(2024)Liang, Wang, Xia, and Yuan]{liang2024dynamic}
Z.~Liang, S.~Wang, J.~Xia, and F.~Yuan.
\newblock Dynamic portfolio selection under generalized disappointment
  aversion, 2024.
\newblock URL \url{https://arxiv.org/abs/2401.08323}.

\bibitem[Mbodji and Pirvu(2023)]{mbodji2023portfolio}
O.~Mbodji and T.~A. Pirvu.
\newblock Portfolio time consistency and utility weighted discount rates, 2023.
\newblock URL \url{https://arxiv.org/abs/2402.05113}.

\bibitem[Strotz(1955)]{Strotz1955}
R.~H. Strotz.
\newblock Myopia and inconsistency in dynamic utility maximization.
\newblock \emph{The Review of Economic Studies}, 23\penalty0 (3):\penalty0
  165--180, 1955.
\newblock URL \url{https://doi.org/10.2307/2295722}.

\bibitem[Yan and Yong(2019)]{yan2019time}
W.~Yan and J.~Yong.
\newblock Time-inconsistent optimal control problems and related issues.
\newblock In G.~Yin and Q.~Zhang, editors, \emph{Modeling, Stochastic Control,
  Optimization, and Applications}, pages 533--569, Cham, 2019. Springer
  International Publishing.
\newblock URL \url{https://doi.org/10.1007/978-3-030-25498-8_22}.

\end{thebibliography}
\end{document}